\documentclass[a4paper,12pt]{amsart}

\usepackage{amsmath,amssymb,amsthm}

\usepackage[foot]{amsaddr}
\usepackage{graphicx}
\usepackage{hyphenat}
\usepackage{tikz}
\usepackage{todonotes}
\usepackage[utf8]{inputenc}
\usepackage[T1]{fontenc}

\newtheorem{theorem}{Theorem}

\let\leq\leqslant
\let\geq\geqslant

\newcommand{\set}[1]{\left\{#1\right\}}

\newcommand{\cnt}[1]{\#\mathsf{#1}}

\address{Theoretical Computer Science Department, Faculty of Mathematics and Computer Science, Jagiellonian University, Krak\'{o}w, Poland}

\begin{document}

\thanks{
  Grzegorz Guśpiel was partially supported by the MNiSW grant DI2013 000443.
}

\title[]{Complexity of Finding \\ Perfect Bipartite Matchings Minimizing the Number of Intersecting Edges}
\author{Grzegorz Guśpiel}

\begin{abstract}
Consider a problem where we are given a~bipartite graph $H$
with vertices arranged on two horizontal lines in the plane,
such that the two sets of vertices placed on the two lines
form a~bipartition of $H$.
We additionally require that $H$ admits a~perfect matching
and assume that edges of $H$ are embedded in the plane as segments.
The goal is to compute the minimal number of intersecting edges
in a perfect matching in $H$.

The problem stems from so-called \emph{token swapping} problems,
  introduced by Yamanaka~et~al. \cite{Yamanaka2015} and
  generalized by Bonnet, Miltzow and Rzążewski \cite{BonnetMR17}.
We show that our problem,
equivalent to one of the special cases of one of the token swapping problems,
is $\mathbf{NP}$-complete.
\end{abstract}

\maketitle

\section{Preliminaries}

All graphs considered here are undirected, loopless and without multiple edges.
For a graph $G$, the sets $V(G), E(G)$ are the set of vertices and the set of edges of $G$, respectively.
For a vertex $v$, the number $d(v)$ is its degree.

\section{Token swapping problems}

In 2014, Yamanaka~et~al.~\cite{Yamanaka2014,Yamanaka2015} introduced \textsc{Token Swapping}
-- a~problem, in which
we are given a graph, whose every vertex contains a token.
Every token has a destination vertex.
In one turn, it is allowed to swap tokens on adjacent vertices.
The goal is to bring every token to its destination vertex in the smallest
number of turns.

Two generalizations of the problem were introduced.
In the first one, 
presented in the paper by Yamanaka~et~al. and
called \textsc{Colored Token Swapping},
tokens and vertices have colors and
the goal is to move each token to a vertex of the same color.
In the second generalization, 
introduced by Bonnet, Miltzow and Rzążewski \cite{BonnetMR17}
and called \textsc{Subset Token Swapping},
every token is assigned 
a set of destination vertices and is required to reach any of them.
For a more precise definition, we refer to the original paper.

\textsc{Token Swapping} 
was shown to be $\mathbf{NP}$-complete
by Miltzow~et~al.\cite{MiltzowNORTU16}.
This means that the two generalizations are $\mathbf{NP}$-complete as well.
One of the results of Bonnet~et~al.~is an analysis of the complexity of 
all three token swapping problems
on certain simple graph classes,
i.e.~trees, cliques, stars and paths.
\textsc{Subset Token Swapping}
was shown to be $\mathbf{NP}$-complete on the first three classes,
but the status of the problem for paths was unknown.
We complement this analysis by proving that
\textsc{Subset Token Swapping} restricted to paths
is $\mathbf{NP}$-complete as well.

\section{Hardness proof}

Miltzow \cite{Miltzow16} showed that
\textsc{Subset Token Swapping} restricted to paths
is equivalent to our matching problem, which we now formally define as follows:

\medskip
\noindent \underline{\textsc{Crossing-avoiding Matching}}:\smallskip \\
\noindent{\bf Input:}
$(H, V_1, V_2, \sigma_1, \sigma_2, m)$, 
where $H$ is a bipartite graph that admits at least one perfect matching,
$(V_1, V_2)$ is a bipartition of $H$,
$\sigma_1$ is a permutation of $V_1$,
$\sigma_2$ is a permutation of $V_2$,
and $m$ is a nonnegative integer.
We interpret the permutations as a placement of $V(H)$
on two different parallel horizontal lines in the plane:
vertices in $V_1$ ($V_2$) are assigned different points in the upper (lower) line 
and $\sigma_1$ ($\sigma_2$) is their order from left to right.
Edges are embedded in the plane as segments.

\noindent{\bf Question:}
Does $H$ admit a perfect matching such that at most $m$ pairs of edges intersect? \medskip

We prove the $\mathbf{NP}$-completenes of
\textsc{Subset Token Swapping} on paths 
by showing the following:

\begin{theorem}
$\textsc{Crossing-avoiding Matching}$ is 
$\mathbf{NP}$-complete,
even if the maximum degree of $H$ is 2.
\end{theorem}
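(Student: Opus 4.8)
Membership in $\mathbf{NP}$ is immediate: a perfect matching $M$ of $H$ is a certificate, and since the placement of $V(H)$ on the two parallel lines is part of the input, for each of the $\Oh{\norm{V(H)}^2}$ pairs of edges of $M$ we decide in constant time whether the corresponding segments cross; thus the number of intersecting pairs is computed in polynomial time and compared with $m$. The content of the theorem is $\mathbf{NP}$-hardness, and the plan is a polynomial-time reduction from an $\mathbf{NP}$-complete problem whose optimum is a sum of pairwise ``agreement'' penalties---\textsc{Max-Cut} (given a graph $G$ and an integer $k$, decide whether $G$ has a cut of size at least $k$, equivalently a $2$-coloring with at most $\norm{E(G)}-k$ monochromatic edges) is a convenient source; a bounded-occurrence variant of \textsc{Max-2-Sat} would serve equally well.

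The reduction rests on a structural observation, which also explains why the degree bound is the natural regime. If $\Delta(H)\le 2$ and $H$ has a perfect matching, then every connected component of $H$ is an even path---whose perfect matching is unique, so its edges are forced---or an even cycle, which has exactly two perfect matchings. Thus a perfect matching of $H$ is precisely a choice, for each even cycle $C_i$, of one of its two matchings $x_i\in\{0,1\}$, and the number of intersecting pairs of the resulting matching equals $(\text{crossings among forced edges}) + \sum_i f_i(x_i) + \sum_{i<j} g_{ij}(x_i,x_j)$, where $f_i$ accounts for crossings involving only $C_i$ and the forced edges, and $g_{ij}$ for crossings between a chosen edge of $C_i$ and a chosen edge of $C_j$. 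By laying the cycles out on the two lines we control which pairs of cycles interact at all and what the (at most four) values of each $g_{ij}$ are; this is exactly the expressive power needed to encode pairwise penalties, and since every component is a short even cycle we obtain $\Delta(H)=2$ for free.

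Given an instance $(G,k)$ of \textsc{Max-Cut}, the plan is to assemble $H$ from three kinds of blocks, all disjoint unions of short even cycles. \emph{Variable blocks}: for each vertex of $G$ a family of cycles whose common choice encodes a color. \emph{Wires}: bundles of nearly parallel edges carrying a variable's color along the two lines to every place it is needed, designed so that (a)~a wire's internal crossing count is the same in both of its consistent states, (b)~two wires that cross one another contribute a number of crossings independent of the colors they carry, and (c)~making a wire globally consistent is enforced by a crossing penalty far larger than any other quantity in the instance. \emph{Edge gadgets}: for each edge $uv$ of $G$ a dedicated stretch of the two lines where a portion of $u$'s structure and a portion of $v$'s structure are interleaved so that the number of crossings between their edges is smaller by a fixed $\delta>0$ when the two colors disagree than when they agree. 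Choosing the penalties so that no minimum-crossing perfect matching can afford to break the consistency of a variable block or of a wire, and setting $m=B+(\norm{E(G)}-k)\,\delta$ where $B$ is the base cost of the instance, one gets that $H$ has a perfect matching with at most $m$ intersecting pairs if and only if $G$ admits a $2$-coloring with at most $\norm{E(G)}-k$ monochromatic edges, i.e.\ a cut of size at least $k$. The construction is plainly polynomial in $\norm{V(G)}+\norm{E(G)}$.

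The step I expect to be the main obstacle is the geometric bookkeeping forced by the rigidity of the two-parallel-lines model: whether two edges cross is decided by the left-to-right orders alone, and the wires must be routed past one another within a single dimension, so they cannot simply be kept apart as in a planar drawing. The delicate work is to build the cycles so that (a)--(c) above actually hold---so that the only color-dependent contribution to the crossing count is the intended $\delta$ at each edge gadget---and then to show that a minimum-crossing perfect matching must keep all variable blocks and wires consistent: disregarding the intended structure could save at most $\norm{E(G)}\,\delta$, whereas disregarding any single block or wire costs strictly more. Verifying (a)--(c) reduces to computing exactly the crossing contribution of each gadget and of each interacting pair of gadgets; with that in hand, the equivalence with \textsc{Max-Cut} is a routine counting argument, which together with $\mathbf{NP}$-membership yields $\mathbf{NP}$-completeness.
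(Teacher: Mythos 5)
Your $\mathbf{NP}$-membership argument is fine, and your structural observation is sound: with $\Delta(H)\le 2$ and a perfect matching, the components are even paths (forced) and even cycles (binary choices), so the crossing count decomposes into a constant plus unary terms plus pairwise terms $g_{ij}(x_i,x_j)$. This is indeed the engine behind the paper's construction as well. But what you have written is a plan, not a proof: every object that actually carries the hardness --- the variable blocks, the wires with properties (a)--(c), the edge gadget realizing a fixed gap $\delta$ between agreeing and disagreeing colors, and the quantitative claim that breaking the consistency of any single wire or block costs strictly more than $|E(G)|\,\delta$ --- is postulated rather than constructed. You flag the geometric bookkeeping as ``the main obstacle,'' but that bookkeeping \emph{is} the theorem. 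In the two-parallel-lines model the pairwise tables $g_{ij}$ are not freely programmable: a single left-to-right order on each line determines which cycles interact and how, and a ``wire'' must be routed past everything lying between its endpoints, perturbing its crossing counts against all of it; in particular it is not established that a wire can have equal internal cost in both consistent states, color-independent cost against crossing wires, and a large penalty for inconsistency, all simultaneously. Until concrete cycles with verified crossing counts are exhibited, the reduction could fail exactly at the point you identify, so there is a genuine gap.

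For comparison, the paper reduces from \textsc{Vertex Cover} and avoids wires and soft consistency penalties altogether. Each vertex $v$ of $G$ is represented by a \emph{single} long cycle (on $8d(v)$ vertices, plus one forced edge), so its state is globally consistent for free --- a cycle has only two perfect matchings; each edge of $G$ is a single $6$-cycle whose vertices are placed partly in slots of one endpoint's vertex gadget and partly in the other's, so the edge gadget itself spans the distance and no information has to be transported along the lines. A direct count then shows that the total number of crossings equals $\cnt{s} + 2\cdot\cnt{nsc} + c_5$, i.e.\ the number of selected vertex gadgets plus twice the number of edge gadgets whose covered side sits in a non-selected vertex gadget, plus a constant; setting $m = k + c_5$ gives the equivalence with vertex covers of size at most $k$, with no threshold or large-penalty argument needed. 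If you want to salvage your \textsc{Max-Cut} route, the most economical repair is the same trick: let one cycle per variable stretch across all positions where that variable is needed, and let each pairwise gadget be a single small cycle straddling the two relevant variable regions --- at which point you will have essentially reconstructed the paper's construction, just with a different source problem.
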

\begin{proof}
  It is trivial that the problem is in $\mathbf{NP}$.
  Our proof of $\mathbf{NP}$-hardness is a polynomial-time reduction
  from $\textsc{Vertex Cover}$.

  The reduction algorithm takes as input a \textsc{Vertex Cover} instance $(G, k)$
  and outputs an instance $(H, V_1, V_2, \sigma_1, \sigma_2, m)$
  of \textsc{Crossing-avoiding Matching}
  that is a \emph{YES}-instance if and only if 
  $G$ admits a vertex cover of size at most $k$.
  Our construction is based on two gadgets.
  The first one is created for every vertex of $G$. For every integer $s \geq 1$, the \emph{vertex gadget of size $s$}
  is a cycle on $4s$ vertices together with a path on 2 vertices, positioned 
  as shown in Figure~\ref{vg}.
  \begin{figure}[h]
    \begin{center}
      \begin{tikzpicture}
\begin{scope}[xscale=0.55, yscale=0.55]
\definecolor{cborange}{RGB}{230,159,0}
\definecolor{cbblue}{RGB}{0,114,178}
\definecolor{cbpurple}{RGB}{204,121,167}
\definecolor{cbyellow}{RGB}{240,228,66}
\definecolor{cbgreen}{RGB}{0,158,115}
\definecolor{cbvermillon}{RGB}{213,94,0}
\definecolor{mgray}{RGB}{220,220,220}
\definecolor{mlightblue}{RGB}{220,255,255}
\tikzstyle{every node}=[circle,minimum size=3pt,inner sep=0pt,draw,fill]
\tikzstyle{every path}=[line width = 1.9pt]
\scriptsize
\node (r1) at (1, 0) {};
\node (rp1) at (1, 2) {};
\node (l1) at (-1, 0) {};
\node (lp1) at (-1, 2) {};
\node (r2) at (3, 0) {};
\node (rp2) at (3, 2) {};
\node (l2) at (-3, 0) {};
\node (lp2) at (-3, 2) {};
\node (r3) at (5, 0) {};
\node (rp3) at (5, 2) {};
\node (l3) at (-5, 0) {};
\node (lp3) at (-5, 2) {};
\node (r4) at (7, 0) {};
\node (rp4) at (7, 2) {};
\node (l4) at (-7, 0) {};
\node (lp4) at (-7, 2) {};
\node (r5) at (9, 0) {};
\node (rp5) at (9, 2) {};
\node (l5) at (-9, 0) {};
\node (lp5) at (-9, 2) {};
\node (r6) at (11, 0) {};
\node (rp6) at (11, 2) {};
\node (l6) at (-11, 0) {};
\node (lp6) at (-11, 2) {};
\filldraw[fill=mgray, draw=black,thin,dashed] (1.8, -0.3) rectangle (2.2, 2.3);
\filldraw[fill=mgray, draw=black,thin,dashed] (-1.8, -0.3) rectangle (-2.2, 2.3);
\node [rectangle, fill=none, draw=none, anchor=south, align=center, color = gray, text width = 2cm] at (2, -0.9) {3};
\node [rectangle, fill=none, draw=none, anchor=south, align=center, color = gray, text width = 2cm] at (-2, -0.9) {3};
\filldraw[fill=mgray, draw=black,thin,dashed] (5.8, -0.3) rectangle (6.2, 2.3);
\filldraw[fill=mgray, draw=black,thin,dashed] (-5.8, -0.3) rectangle (-6.2, 2.3);
\node [rectangle, fill=none, draw=none, anchor=south, align=center, color = gray, text width = 2cm] at (6, -0.9) {2};
\node [rectangle, fill=none, draw=none, anchor=south, align=center, color = gray, text width = 2cm] at (-6, -0.9) {2};
\filldraw[fill=mgray, draw=black,thin,dashed] (9.8, -0.3) rectangle (10.2, 2.3);
\filldraw[fill=mgray, draw=black,thin,dashed] (-9.8, -0.3) rectangle (-10.2, 2.3);
\node [rectangle, fill=none, draw=none, anchor=south, align=center, color = gray, text width = 2cm] at (10, -0.9) {1};
\node [rectangle, fill=none, draw=none, anchor=south, align=center, color = gray, text width = 2cm] at (-10, -0.9) {1};
\draw [color=cborange](r1)--(rp2);
\draw [color=cborange](rp1)--(r2);
\draw [color=cborange](l1)--(lp2);
\draw [color=cborange](lp1)--(l2);
\draw [color=cborange](r3)--(rp4);
\draw [color=cborange](rp3)--(r4);
\draw [color=cborange](l3)--(lp4);
\draw [color=cborange](lp3)--(l4);
\draw [color=cborange](r5)--(rp6);
\draw [color=cborange](rp5)--(r6);
\draw [color=cborange](l5)--(lp6);
\draw [color=cborange](lp5)--(l6);
\draw [color=cbblue](r2)--(rp3);
\draw [color=cbblue](r3)--(rp2);
\draw [color=cbblue](l2)--(lp3);
\draw [color=cbblue](l3)--(lp2);
\draw [color=cbblue](r4)--(rp5);
\draw [color=cbblue](r5)--(rp4);
\draw [color=cbblue](l4)--(lp5);
\draw [color=cbblue](l5)--(lp4);
\node (m) at (0, 0) {};
\node (mp) at (0, 2) {};
\draw [color=cbpurple](m)--(mp);
\draw [color=cbblue](r1)--(lp1);
\draw [color=cbblue](l1)--(rp1);
\draw [color=cbblue](r6)--(rp6);
\draw [color=cbblue](l6)--(lp6);
\end{scope}
\end{tikzpicture}
     \end{center}
    \caption{The vertex gadget of size 3.
    }
    \label{vg}
  \end{figure}
  We distinguish special areas in the vertex gadget, 
  in which we put other elements of our construction.
  These areas are called \emph{slots} and are marked with gray rectangles.
  We also number them as in the figure.
  The ones to the left of the purplish edge are called \emph{left slots}
  and the ones to the right -- \emph{right slots}.
  The vertex gadget of size $s$ has $s$ left slots and $s$ right slots.
  Furthermore, observe that there are only two ways to choose a perfect matching in this gadget:
  either take the blue edges and the purplish edge in the middle, or the yellowish edges and the purplish one.
  Choosing the blue (the yellowish) matching is interpreted
  as selecting (not selecting) the vertex to the cover
  and we say that the gadget is `selected' (`not selected').

  The second gadget, the \emph{edge gadget}, is created for every edge of $G$.
  It is shown in Figure~\ref{eg}.
  \begin{figure}[h]
    \begin{center}
      \begin{tikzpicture}
\begin{scope}[xscale=0.55, yscale=0.55]
\definecolor{cborange}{RGB}{230,159,0}
\definecolor{cbblue}{RGB}{0,114,178}
\definecolor{cbpurple}{RGB}{204,121,167}
\definecolor{cbyellow}{RGB}{240,228,66}
\definecolor{cbgreen}{RGB}{0,158,115}
\definecolor{cbvermillon}{RGB}{213,94,0}
\definecolor{mgray}{RGB}{220,220,220}
\definecolor{mlightblue}{RGB}{220,255,255}
\tikzstyle{every node}=[circle,minimum size=3pt,inner sep=0pt,draw,fill]
\tikzstyle{every path}=[line width = 1.9pt]
\scriptsize
\node [label=below:$b$] (b) at (0, 0) {};
\node [label=above:$a$] (a) at (0, 2) {};
\node [label=below:$c$] (c) at (2, 0) {};
\node [label=above:$d$] (d) at (10, 2) {};
\node [label=above:$e$] (e) at (12, 2) {};
\node [label=below:$f$] (f) at (12, 0) {};
\draw [color=cbgreen](a)--(b);
\draw [color=cbvermillon](d)--(b);
\draw [color=cbgreen](c)--(e);
\draw [color=cbvermillon](f)--(e);
\draw [color=cbvermillon](a)--(c);
\draw [color=cbgreen](d)--(f);
\end{scope}
\end{tikzpicture}
     \end{center}
    \caption{The edge gadget.}
    \label{eg}
  \end{figure}

  The construction proceeds as follows.
  First, for every $v \in V(G)$, we create 
  a copy of the vertex gadget of size $2d(v)$.
  We place them on the two horizontal lines in such a way
  that each gadget occupies a separate range of the $x$ axis, in any order.
  Now, for every edge $\set{u,v}$, where the gadget of $u$ is to the left of the gadget of $v$,
  we select two consecutive right slots in the gadget of $u$ and two consecutive left slots in the gadget of $v$, create a copy of the edge gadget and place its vertices as follows:
  \begin{itemize}
    \item vertices $a$ and $b$ in the left selected slot of the gadget of $u$,
    \item vertex $c$ in the right selected slot of the gadget of $u$,
    \item vertex $d$ in the left selected slot of the gadget of $v$,
    \item verticed $e$ and $f$ in the right selected slot of the gadget of $v$.
  \end{itemize}
  Such a selection of consecutive slots for each edge is of course possible,
  as we set the size of the vertex gadget to be $2d(v)$.
  See Figure~\ref{vgeg} for a complete example.
  \begin{figure}[h]
    \begin{center}
      \begin{tikzpicture}
\begin{scope}[xscale=0.55, yscale=0.55]
\definecolor{cborange}{RGB}{230,159,0}
\definecolor{cbblue}{RGB}{0,114,178}
\definecolor{cbpurple}{RGB}{204,121,167}
\definecolor{cbyellow}{RGB}{240,228,66}
\definecolor{cbgreen}{RGB}{0,158,115}
\definecolor{cbvermillon}{RGB}{213,94,0}
\definecolor{mgray}{RGB}{220,220,220}
\definecolor{mlightblue}{RGB}{220,255,255}
\tikzstyle{every node}=[circle,minimum size=3pt,inner sep=0pt,draw,fill]
\tikzstyle{every path}=[line width = 1.9pt]
\scriptsize
\begin{scope}[xscale=0.7, yscale=1.0]
\node (x) at (-16, -0.5) {};
\node (y) at (-16, 1) {};
\node (z) at (-16, 2.5) {};
    \draw [line width = 0.6pt] (x) -- (y);
    \draw [line width = 0.6pt] (y) -- (z);
\filldraw[fill=white, draw=black, thin] (-13.9, -0.6) rectangle (-7.1, 2.6);
\filldraw[fill=mgray, draw=black,thin,dashed] (-9.9, -0.35) rectangle (-9.1, 2.35);
\filldraw[fill=mgray, draw=black,thin,dashed] (-11.1, -0.35) rectangle (-11.9, 2.35);
\filldraw[fill=mgray, draw=black,thin,dashed] (-8.4, -0.35) rectangle (-7.6, 2.35);
\filldraw[fill=mgray, draw=black,thin,dashed] (-12.6, -0.35) rectangle (-13.4, 2.35);
\filldraw[fill=white, draw=black, thin] (-6.4, -0.6) rectangle (6.4, 2.6);
\filldraw[fill=mgray, draw=black,thin,dashed] (0.6, -0.35) rectangle (1.4, 2.35);
\filldraw[fill=mgray, draw=black,thin,dashed] (-0.6, -0.35) rectangle (-1.4, 2.35);
\filldraw[fill=mgray, draw=black,thin,dashed] (2.1, -0.35) rectangle (2.9, 2.35);
\filldraw[fill=mgray, draw=black,thin,dashed] (-2.1, -0.35) rectangle (-2.9, 2.35);
\filldraw[fill=mgray, draw=black,thin,dashed] (3.6, -0.35) rectangle (4.4, 2.35);
\filldraw[fill=mgray, draw=black,thin,dashed] (-3.6, -0.35) rectangle (-4.4, 2.35);
\filldraw[fill=mgray, draw=black,thin,dashed] (5.1, -0.35) rectangle (5.9, 2.35);
\filldraw[fill=mgray, draw=black,thin,dashed] (-5.1, -0.35) rectangle (-5.9, 2.35);
\filldraw[fill=white, draw=black, thin] (7.1, -0.6) rectangle (13.9, 2.6);
\filldraw[fill=mgray, draw=black,thin,dashed] (11.1, -0.35) rectangle (11.9, 2.35);
\filldraw[fill=mgray, draw=black,thin,dashed] (9.9, -0.35) rectangle (9.1, 2.35);
\filldraw[fill=mgray, draw=black,thin,dashed] (12.6, -0.35) rectangle (13.4, 2.35);
\filldraw[fill=mgray, draw=black,thin,dashed] (8.4, -0.35) rectangle (7.6, 2.35);
\node (b) at (-9.5, 0) {};
\node (a) at (-9.5, 2) {};
\node (c) at (-8, 0) {};
\node (d) at (-5.5, 2) {};
\node (e) at (-4, 2) {};
\node (f) at (-4, 0) {};
\draw [color=cbgreen](a)--(b);
\draw [color=cbvermillon](d)--(b);
\draw [color=cbgreen](c)--(e);
\draw [color=cbvermillon](f)--(e);
\draw [color=cbvermillon](a)--(c);
\draw [color=cbgreen](d)--(f);
\node (b) at (2.5, 0) {};
\node (a) at (2.5, 2) {};
\node (c) at (4, 0) {};
\node (d) at (8, 2) {};
\node (e) at (9.5, 2) {};
\node (f) at (9.5, 0) {};
\draw [color=cbgreen](a)--(b);
\draw [color=cbvermillon](d)--(b);
\draw [color=cbgreen](c)--(e);
\draw [color=cbvermillon](f)--(e);
\draw [color=cbvermillon](a)--(c);
\draw [color=cbgreen](d)--(f);
\end{scope}
\end{scope}
\end{tikzpicture}
     \end{center}
    \caption{
     A graph $G$
     and a possible bipartite graph 
     obtained by 
     passing $G$ to the reduction algorithm,
     vertex gadgets presented schematically.
   }
    \label{vgeg}
  \end{figure}
  The edge gadget admits exactly two perfect matchings as well and 
  just like previously, we give interpretations to these matchings.
  If the reddish (greenish) matching is selected, 
  we say that the edge gadget is `covered'
  at the right (left) side
  and `not covered' at the left (right) side.
  Our naming convention may be confusing, as in the case of vertex covers,
  an edge may be covered at both sides,
  and our edge gadgets are always `not covered' at one side.
  The property that we want to enforce is as follows:
  in every optimal solution,
  when the edge gadget is `covered' at one side, 
  the corresponding vertex gadget must be `selected',
  and when the edge gadget is `not covered' at this side, the vertex gadget may be either `selected' or `not selected'.

Now, we assume that the positions of all the gadgets are fixed
and count the number of crossing edges.
In our analysis, we are only interested in how this number changes
when a different matching is chosen,
and for this reason we introduce constants $c_1, c_2, \ldots$
that are dependent on the way the gadgets were assembled on the two horizontal lines,
but not on the choice of matching.
First, we count such crossings, where an edge of the vertex gadget crosses
another edge of the same vertex gadget. As the vertex gadget of size $s$ admits $2s+1$
intersections if `selected' and $2s$ otherwise, this number is equal to:

\begin{displaymath}
  \cnt{s} + \sum_{v \in V(G)} 2 \cdot 2d(v) = \cnt{s} + c_1,
\end{displaymath}
where $\cnt{s}$ is the number of `selected' vertex gadgets.

The number of intersections between edges of edge gadgets
turns out to be independent of the matching chosen
and we denote it by $c_2$.
To see this, first observe that the number of intersections inside the edge gadget is always 1.
Second, note that for two different copies of the edge gadget,
the number of intersections between them is either 0, 1, 2 or 3,
but in all cases it it independent of the choice of the matching.

It remains to count the number of intersections such that one edge belongs to the vertex gadget
and the other to the edge gadget.
Fix $v \in V(G)$ and $e \in E(G)$.
We count intersections between edges of the vertex gadget of $v$ and edges of the edge gadget of $e$.
If $v \notin e$,
this number is independent of the choice of the matching.
Hence, we denote the number of such intersections
between every vertex gadget and every edge gadget by $c_3$.
Now assume that $v \in e$.
As the vertex gadget and the edge gadget admit 2 possible perfect matchings each,
we have 4 possibilities, as listed in Figure~\ref{cases}.
\begin{figure}[h]
  \begin{center}
    \begin{tikzpicture}
\begin{scope}[xscale=0.55, yscale=0.55]
\definecolor{cborange}{RGB}{230,159,0}
\definecolor{cbblue}{RGB}{0,114,178}
\definecolor{cbpurple}{RGB}{204,121,167}
\definecolor{cbyellow}{RGB}{240,228,66}
\definecolor{cbgreen}{RGB}{0,158,115}
\definecolor{cbvermillon}{RGB}{213,94,0}
\definecolor{mgray}{RGB}{220,220,220}
\definecolor{mlightblue}{RGB}{220,255,255}
\tikzstyle{every node}=[circle,minimum size=3pt,inner sep=0pt,draw,fill]
\tikzstyle{every path}=[line width = 1.9pt]
\scriptsize

\node (r1) at (1, 0) {};
\begin{scope}[shift={(0,0)}]
    \filldraw[fill=mgray, draw=black,thin,dashed] (1.8, -0.3) rectangle (2.2, 2.3);
    \filldraw[fill=mgray, draw=black,thin,dashed] (5.8, -0.3) rectangle (6.2, 2.3);
    \filldraw[fill=mgray, draw=black,thin,dashed] (9.8, -0.3) rectangle (10.2, 2.3);

    \node (r1) at (1, 0) {};
    \node (rp1) at (1, 2) {};
    \node (r2) at (3, 0) {};
    \node (rp2) at (3, 2) {};
    \node (r3) at (5, 0) {};
    \node (rp3) at (5, 2) {};
    \node (r4) at (7, 0) {};
    \node (rp4) at (7, 2) {};
    \node (r5) at (9, 0) {};
    \node (rp5) at (9, 2) {};
    \node (r6) at (11, 0) {};
    \node (rp6) at (11, 2) {};
    \draw [color=cbblue](r2)--(rp3);
    \draw [color=cbblue](r3)--(rp2);
    \draw [color=cbblue](r4)--(rp5);
    \draw [color=cbblue](r5)--(rp4);
    \draw [color=cbblue](r6)--(rp6);

    \node (b) at (2, 0) {};
    \node  (a) at (2, 2) {};
    \node  (c) at (6, 0) {};
    \node  (d) at (16, 2) {};
    \node  (e) at (18, 2) {};
    \node  (f) at (18, 0) {};
    \draw [color=cbgreen](a)--(b);
    \draw [color=cbgreen](c)--(e);
    \draw [color=cbgreen](d)--(f);

    \fill [color=white] (0, -0.3) rectangle (1.3, 2.3);
    \fill [color=white] (13, -0.3) rectangle (18.3, 2.3);

    \node [rectangle, fill=none, draw=none, align=left, anchor=west, text width = 4cm] at (14,1.1) {\normalsize vertex selected, \\ edge covered};

\end{scope}

\begin{scope}[shift={(0,-3.5)}]
    \filldraw[fill=mgray, draw=black,thin,dashed] (1.8, -0.3) rectangle (2.2, 2.3);
    \filldraw[fill=mgray, draw=black,thin,dashed] (5.8, -0.3) rectangle (6.2, 2.3);
    \filldraw[fill=mgray, draw=black,thin,dashed] (9.8, -0.3) rectangle (10.2, 2.3);

    \node (r1) at (1, 0) {};
    \node (rp1) at (1, 2) {};
    \node (r2) at (3, 0) {};
    \node (rp2) at (3, 2) {};
    \node (r3) at (5, 0) {};
    \node (rp3) at (5, 2) {};
    \node (r4) at (7, 0) {};
    \node (rp4) at (7, 2) {};
    \node (r5) at (9, 0) {};
    \node (rp5) at (9, 2) {};
    \node (r6) at (11, 0) {};
    \node (rp6) at (11, 2) {};
    \draw [color=cbblue](r2)--(rp3);
    \draw [color=cbblue](r3)--(rp2);
    \draw [color=cbblue](r4)--(rp5);
    \draw [color=cbblue](r5)--(rp4);
    \draw [color=cbblue](r6)--(rp6);

    \node (b) at (2, 0) {};
    \node  (a) at (2, 2) {};
    \node  (c) at (6, 0) {};
    \node  (d) at (16, 2) {};
    \node  (e) at (18, 2) {};
    \node  (f) at (18, 0) {};
    \draw [color=cbvermillon](f)--(e);
    \draw [color=cbvermillon](a)--(c);
    \draw [color=cbvermillon](d)--(b);

    \fill [color=white] (0, -0.3) rectangle (1.3, 2.3);
    \fill [color=white] (13, -0.3) rectangle (18.3, 2.3);

    \node [rectangle, fill=none, draw=none, align=left, anchor=west, text width = 4cm] at (14,1.1) {\normalsize vertex selected, \\ edge not covered};
\end{scope}

\begin{scope}[shift={(0,-7)}]
    \filldraw[fill=mgray, draw=black,thin,dashed] (1.8, -0.3) rectangle (2.2, 2.3);
    \filldraw[fill=mgray, draw=black,thin,dashed] (5.8, -0.3) rectangle (6.2, 2.3);
    \filldraw[fill=mgray, draw=black,thin,dashed] (9.8, -0.3) rectangle (10.2, 2.3);

    \node (r1) at (1, 0) {};
    \node (rp1) at (1, 2) {};
    \node (r2) at (3, 0) {};
    \node (rp2) at (3, 2) {};
    \node (r3) at (5, 0) {};
    \node (rp3) at (5, 2) {};
    \node (r4) at (7, 0) {};
    \node (rp4) at (7, 2) {};
    \node (r5) at (9, 0) {};
    \node (rp5) at (9, 2) {};
    \node (r6) at (11, 0) {};
    \node (rp6) at (11, 2) {};
    \draw [color=cborange](r1)--(rp2);
    \draw [color=cborange](rp1)--(r2);
    \draw [color=cborange](r3)--(rp4);
    \draw [color=cborange](rp3)--(r4);
    \draw [color=cborange](r5)--(rp6);
    \draw [color=cborange](rp5)--(r6);

    \node (b) at (2, 0) {};
    \node  (a) at (2, 2) {};
    \node  (c) at (6, 0) {};
    \node  (d) at (16, 2) {};
    \node  (e) at (18, 2) {};
    \node  (f) at (18, 0) {};
    \draw [color=cbgreen](a)--(b);
    \draw [color=cbgreen](c)--(e);
    \draw [color=cbgreen](d)--(f);

    \fill [color=white] (0, -0.3) rectangle (1.3, 2.3);
    \fill [color=white] (13, -0.3) rectangle (18.3, 2.3);

    \node [rectangle, fill=none, draw=none, align=left, anchor=west, text width = 4cm] at (14,1.1) {\normalsize vertex not selected, \\ edge covered};
\end{scope}

\begin{scope}[shift={(0,-10.5)}]
    \filldraw[fill=mgray, draw=black,thin,dashed] (1.8, -0.3) rectangle (2.2, 2.3);
    \filldraw[fill=mgray, draw=black,thin,dashed] (5.8, -0.3) rectangle (6.2, 2.3);
    \filldraw[fill=mgray, draw=black,thin,dashed] (9.8, -0.3) rectangle (10.2, 2.3);

    \node (r1) at (1, 0) {};
    \node (rp1) at (1, 2) {};
    \node (r2) at (3, 0) {};
    \node (rp2) at (3, 2) {};
    \node (r3) at (5, 0) {};
    \node (rp3) at (5, 2) {};
    \node (r4) at (7, 0) {};
    \node (rp4) at (7, 2) {};
    \node (r5) at (9, 0) {};
    \node (rp5) at (9, 2) {};
    \node (r6) at (11, 0) {};
    \node (rp6) at (11, 2) {};
    \draw [color=cborange](r1)--(rp2);
    \draw [color=cborange](rp1)--(r2);
    \draw [color=cborange](r3)--(rp4);
    \draw [color=cborange](rp3)--(r4);
    \draw [color=cborange](r5)--(rp6);
    \draw [color=cborange](rp5)--(r6);

    \node (b) at (2, 0) {};
    \node  (a) at (2, 2) {};
    \node  (c) at (6, 0) {};
    \node  (d) at (16, 2) {};
    \node  (e) at (18, 2) {};
    \node  (f) at (18, 0) {};
    \draw [color=cbvermillon](f)--(e);
    \draw [color=cbvermillon](a)--(c);
    \draw [color=cbvermillon](d)--(b);

    \fill [color=white] (0, -0.3) rectangle (1.3, 2.3);
    \fill [color=white] (13, -0.3) rectangle (18.3, 2.3);

    \node [rectangle, fill=none, draw=none, align=left, anchor=west, text width = 4cm] at (14,1.1) {\normalsize vertex not selected, \\ edge not covered};
\end{scope}

\end{scope}
\end{tikzpicture}
   \end{center}
  \caption{
    The 4 possible configurations of an intersection of the right part of the vertex gadget and the left part of the edge gadget.
  }
  \label{cases}
\end{figure}
The figure does not lose generality: in the figure,
we are considering the right part of the vertex gadget
and the left part of the edge gadget,
but the analysis is the same in the opposite case.
Let $s, s + 1$ be the numbers of the two slots in the vertex gadget of $v$ 
occupied by vertices of the edge gadget of $e$.
The number of intersections between edges of the gadget of $v$ and edges of the gadget of $e$ is equal to:
\begin{itemize}
  \item $2(s-1) + 1 = 1 + c_4$ in the `vertex selected, edge covered' case,
  \item $2(s-1) + 5 = 5 + c_4$ in the `vertex selected, edge not covered' case,
  \item $2(s-1) + 3 = 3 + c_4$ in the `vertex not selected, edge covered' case,
  \item $2(s-1) + 5 = 5 + c_4$ in the `vertex not selected, edge not covered' case.
\end{itemize}

Let the variables $\cnt{sc}, \cnt{snc}, \cnt{nsc}, \cnt{nsnc}$
count occurences of each of the four cases above in the entire graph $H$, respectively.
The total number of crossing edges is equal to:
\begin{multline*}
  \cnt{s} + c_1 + c_2 + c_3  \\
  + (1 + c_4)\cnt{sc} 
  + (3 + c_4)\cnt{nsc}
  + (5 + c_4)\cnt{snc} 
  + (5 + c_4)\cnt{nsnc},
\end{multline*}
However, as every edge gadget is `covered' at one side and `not covered' at the other,
we have $\cnt{sc} + \cnt{nsc} = \cnt{snc} + \cnt{nsnc} = |E(G)|$
and hence the calculation simplifies to
\begin{multline*}
  \cnt{s} + c_1 + c_2 
  + 2 \cdot \cnt{nsc} 
  + (1 + c_4)|E(G)| + (5 + c_4)|E(G)| \\
  = \cnt{s} + 2 \cdot \cnt{nsc} + c_5.
\end{multline*}

To complete the description of the reduction algorithm, 
we set $m = k + c_5$.

It is straightforward to implement the reduction algorithm in polynomial time.
It remains to prove that $G$ admits a vertex cover of size $k$
if and only if $H$ admits a perfect matching with at most $m$ intersections.

First suppose that $G$ admits a vertex cover $C$ of size at most $k$.
Then one can choose the `selected' perfect matching for vertex gadgets
of every vertex in $C$
and the `not selected' perfect matching for every other vertex.
Moreover, as every edge of $G$ is covered,
one can choose perfect matchings in edge gadgets
so that their `covered' side is in a `selected' vertex gadget.
Then $\cnt{nsc} = 0$ and the number of intersecting edges is equal to
$\cnt{s} + c_5 \leq k + c_5 = m$,
so $H$ admits a perfect matching with at most $m$ intersections.

For the second implication, assume that $H$
admits a perfect matching with at most $m$ intersections.
Let $M$ be any matching with minimal number of intersections.
Observe that $\cnt{nsc} = 0$,
as if there exists a vertex gadget that is `not selected' and 
intersects a `covered' edge gadget,
one can choose the vertex gadget to be `selected' instead,
and achieve a perfect matching with fewer intersections,
which contradicts the minimality of $M$.
Now we construct a vertex cover of $G$: we select exactly the vertices
whose vertex gadgets were `selected'.
To see that this is a vertex cover, fix an edge of $G$.
At the `covered' side of its edge gadget, 
the vertex gadget is `selected', because $\cnt{nsc} = 0$.
Thus, the corresponding vertex is selected to the cover.
Finally, as the number of intersections in our construction is equal to $\cnt{s} + c_5$ 
and is at most $m$,
the size of the vertex cover, equal to $\cnt{s}$, 
is at most $m - c_5 = k$.
\end{proof}

Observe that in the proof above,
the size of the \textsc{Crossing-avoiding Matching}
instance is linear in the size of the $\textsc{Vertex Cover}$ instance.
Indeed, for every vertex $v \in G$ we produce $16d(v) + 2$ vertices of $H$,
and for every edge of $G$, six vertices are produced.
Hence, the reduction algorithm yields $2|V(G)| + 32|E(G)|$ vertices,
and, as as the vertices in $H$ are of degree at most 2,
at most that many edges.
Suppose there exists a $2^{o(|V(H)| + |E(H)|)}$ algorithm for \textsc{Crossing-avoiding Matching}.
Composing the reduction with such an algorithm
gives a $2^{o(|V(G)| + |E(G)|)}$ algorithm for $\textsc{Vertex Cover}$,
which contradicts the Exponential Time Hypothesis
according to Theorem~14.6.~ from the book by Cygan et~al.~\cite{CyganFKLMPPS15}.
This yields the following:

\begin{theorem}\label{t2}
  There is no $2^{o(|V(H)| + |E(H)|)}$ algorithm for \textsc{Crossing-avoiding Matching},
  unless ETH fails.
\end{theorem}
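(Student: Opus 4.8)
The plan is to establish the contrapositive: assuming an algorithm running in time $2^{o(|V(H)| + |E(H)|)}$ for \textsc{Crossing-avoiding Matching}, I would compose it with the reduction from the proof of the previous theorem to obtain a $2^{o(|V(G)| + |E(G)|)}$ algorithm for \textsc{Vertex Cover}, and then derive a contradiction with the Exponential Time Hypothesis using Theorem~14.6 of the book of Cygan et~al.~\cite{CyganFKLMPPS15}.

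The first and essential step is to verify that the reduction of the previous theorem is \emph{linear} in the input size, i.e.\ that $|V(H)| + |E(H)| = \Oh{|V(G)| + |E(G)|}$. For this I would revisit the construction: for each vertex $v \in V(G)$ the vertex gadget of size $2d(v)$ has $\Ot{d(v)}$ vertices (a cycle together with a short path), and for each edge of $G$ the edge gadget has a constant number of vertices. Summing over $V(G)$ and $E(G)$ and using $\sum_{v \in V(G)} d(v) = 2|E(G)|$ yields $|V(H)| = \Oh{|V(G)| + |E(G)|}$; since the reduction guarantees that the maximum degree of $H$ is at most $2$, we also have $|E(H)| \le |V(H)|$. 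I would additionally observe that the placement of the gadgets and the choice of slots do not affect this asymptotic count, and that the reduction clearly runs in polynomial time.

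With this in hand I would argue as follows. Suppose, for contradiction, that some algorithm $\mathcal{A}$ solves \textsc{Crossing-avoiding Matching} in time $2^{o(|V(H)| + |E(H)|)}$. Given a \textsc{Vertex Cover} instance $(G, k)$, apply the reduction to obtain an instance $(H, V_1, V_2, \sigma_1, \sigma_2, m)$ in polynomial time and then run $\mathcal{A}$ on it; by the correctness proved above, the output decides whether $G$ has a vertex cover of size at most $k$. By the linear-size bound, the total running time is a polynomial in $|V(G)| + |E(G)|$ plus $2^{o(|V(H)| + |E(H)|)} = 2^{o(|V(G)| + |E(G)|)}$, hence $2^{o(|V(G)| + |E(G)|)}$ overall. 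Finally, Theorem~14.6 of Cygan et~al.~\cite{CyganFKLMPPS15} asserts that, assuming ETH, \textsc{Vertex Cover} admits no $2^{o(|V(G)| + |E(G)|)}$ algorithm; this contradiction shows that $\mathcal{A}$ cannot exist unless ETH fails.

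The only part that requires care is the size analysis in the second step: one must be sure that using vertex gadgets of size $2d(v)$ (rather than something superlinear in $d(v)$) keeps $|V(H)| + |E(H)|$ within a constant factor of $|V(G)| + |E(G)|$, and that nothing in the gadget-assembly phase inflates this. Once that bookkeeping is confirmed, everything else is a routine composition of a linear-size polynomial-time reduction with the hypothetical algorithm, so I anticipate no further obstacle.
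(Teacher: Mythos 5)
Your proposal is correct and follows essentially the same route as the paper: establish that the reduction has linear output size (the paper counts exactly $16d(v)+2$ vertices per vertex gadget and six per edge gadget, giving $2|V(G)|+32|E(G)|$ vertices and, by the degree bound, at most that many edges), then compose it with the hypothetical subexponential algorithm and invoke Theorem~14.6 of Cygan et~al.\ to contradict ETH. The only difference is that you give the size bound asymptotically rather than with the paper's exact constants, which does not affect the argument.
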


\section{Acknowledgements}

I would like to thank Grzegorz Gutowski for 
suggesting a direction for developing the gadgets,
and Paweł Rzążewski for introducing me to this problem,
pointing out the lower bound based on ETH,
and other helpful comments.

\bibliographystyle{plain}
\bibliography{paper}

\end{document}